\newtheorem{defi}{Definition}[section]
\newtheorem{prop}{Proposition}[section]
\newtheorem{lem}{Lemma}[section]
\newtheorem{theo}{Theorem}[section]
\newcommand{\Proof}{\noindent{\bf Proof.\,\,\,}}
\renewcommand{\qed}{\hfill $\square$}
\def\eqdef{\stackrel{\mbox{\tiny def}}{=}}     
\newcommand{\ket}[1]{|\kern.3ex#1\kern.3ex\rangle}
\newcommand{\bra}[1]{\langle\kern.3ex #1 \kern.3ex|}
\newcommand{\scalar}[2]{\langle\kern.3ex #1 \kern.3ex|\kern.3ex#2\kern.3ex\rangle}
\newcommand{\dt}{\mathfrak{t}}
\newcommand{\NN}{\mathcal{N}}
\def\C{\mathbb{C}}
\def\lg{\langle }
\def\rg{\rangle }
\def\deq{\stackrel{\mathrm{def}}{=}}
\definecolor{hervecolor}{rgb}{0.8,0,0.7}
\begin{document}

\title{Symmetric generalized binomial distributions}
\author{
H. Bergeron$^{\mathrm{a}}$, 
E.M.F.  Curado$^{\mathrm{b,c}}$,  \\
J.P. Gazeau$^{\mathrm{b,d}}$, 
 Ligia M.C.S. Rodrigues$^{\mathrm{b}}$(\footnote{e-mail: herve.bergeron@u-psud.fr, 
 evaldo@cbpf.br,
gazeau@apc.univ-paris7.fr, ligia@cbpf.br} )\\
\emph{$^{\mathrm{a}}$
Univ Paris-Sud, ISMO, UMR 8214, 91405 Orsay, France} \\
\emph{   $^{\mathrm{b}}$ Centro Brasileiro de Pesquisas Fisicas   } \\
\emph{   $^{\mathrm{c}}$ Instituto Nacional de Ci\^encia e Tecnologia - Sistemas Complexos}\\
\emph{  Rua Xavier Sigaud 150, 22290-180 - Rio de Janeiro, RJ, Brazil  } \\
\emph{  $^{\mathrm{d}}$ APC, UMR 7164,}\\
\emph{ Univ Paris  Diderot, Sorbonne Paris Cit\'e,  }  
\emph{75205 Paris, France} 
}

\maketitle
{\abstract{ 

In two recent articles we have examined a generalization of the binomial distribution associated with a sequence of positive numbers, involving asymmetric expressions of probabilities that break the symmetry {\it win-loss}. We present in this article another generalization (always associated with a sequence of positive numbers) that preserves the symmetry {\it win-loss}. This approach is also based on generating functions and presents constraints of non-negativeness, similar to those  encountered in our previous articles. }}

\tableofcontents

\section{Introduction}
\label{intro}

In two previous papers \cite{curadoetal2011, bergeronetal2012}, we have introduced Bernoulli-like (or binomial-like) distributions associated to  arbitrary sequences of positive numbers and we have given a set  of interesting properties.  These distributions break the {\it win-loss} symmetry of the ordinary binomial distribution. In general they are formal because they are not always positive. We have proved in \cite{curadoetal2011} that for a certain class of such sequences the corresponding  Bernoulli-like distributions are nonnegative and possess a Poisson-like limit law. We discussed also the question of non-negativeness in the general case and we examined under which conditions a probabilistic interpretation can be given to the proposed binomial-like distribution. In \cite{bergeronetal2012} we gave a different approach based on generating functions that allowed to avoid the previous difficulties: the constraints of non-negativeness were automatically fulfilled and a great number of analytical examples became available.

Our previous studies were mainly motivated by the construction and applications of non-linear coherent states (\cite{curadoetal2010}, see also \cite{dodonov2002,gazeaubook09} and references therein). In the present work, we give a different generalization of the binomial distribution that preserves the symmetry {\it win-loss} and which is more appropriate for the analysis of strongly correlated systems and their entropy behavior, as is done for instance in \cite{haneletal09}. 
In that paper the authors present a generalized stochastic model yielding $q$-exponentials for all ($q\neq 1$). (The $q$-exponential is a probability distribution, arising from the maximization of the Tsallis entropy under appropriate constraints, which is suitable for strongly correlated random variables \cite{tsallisBJP09}).  This is achieved by using the Laplace-de Finetti representation theorem for the Leibniz triangle rule \cite{laplace}, which embodies strict scale-invariance of interchangeable random variables. They  demonstrate that strict scale invariance mandates the associated extensive entropy to be Boltzmann-Gibbs (BG). By extensive we mean that the entropy is proportional to the size, defined in a certain sense, of the system. Our symmetric distributions do not in general fulfill the Leibniz triangle rule. Indeed, we prove that the only symmetric distributions, besides the ordinary binomial case,  which obey this rule are precisely those derived from the $q$-exponential. On the other side, we observe that in many other cases this rule is obeyed asymptotically as $n$ goes to infinity. Finally, in the case of the $q$-exponential we show that the corresponding BG entropy is extensive; in the cases the Leibniz rule is asymptotically verified, the BG entropy is asymptotically extensive.   

The questions of non-negativeness encountered in our previous works are also present in the symmetric  case and they are approached through a similar procedure. Interestingly,  the characterization of generating functions compatible with the constraints of non-negativeness leads to the set of solutions already found in \cite{bergeronetal2012}.

A potential application of our results concerns the notion of bipartite entanglement characterized by approaches involving the  Tsallis $q$-entropy \cite{kim10}. Another application concerns the construction of coherent states built from these symmetric deformed binomial distribution, in the same way spin coherent states are built from ordinary binomial distribution \cite{gazeaubook09}.

In Section \ref{genbernoulli} we review the asymmetric generalized binomial  distributions, the associated generating functions and some useful definitions from \cite{curadoetal2011, bergeronetal2012}.  We develop in Section \ref{sec:symmetricbin} the case of symmetric generalized binomial distributions with the necessary mathematical tools. They lead to a characterization of the solutions compatible with a complete probabilistic interpretation. The generalized Poisson distribution is introduced as well. In Section \ref{expval} we derive the general expression for expectation values and moments of generalized binomial distributions in the symmetric case.
In Section \ref{sec:examples} we present some analytical and numerical examples that illustrate the theoretical framework developed in Section \ref{sec:symmetricbin}. These examples belong to a class that we define as \textit{logarithmic scale invariant generating functions.} 

In Section \ref{sec:NLSCS} we present ``nonlinear spin'' coherent states built from those symmetric deformed binomial distributions, and suggest possible physical applications of these deformed distributions. In Section \ref{sec:conclusions} we present our final comments.  Technical details and relevant results with proofs are given in appendixes.

\section{The asymmetric Generalized Binomial Distribution: a brief review}
\label{genbernoulli}

For a sequence of $n$ independent trials with two possibles outcomes, ``win” and ``loss”, the probability of obtaining $k$ wins is given by the binomial  distribution $p_k^{(n)}(\eta)$
\begin{equation}
p_k^{(n)}(\eta) =\left(\begin{array}{c}n \\k \end{array}\right) \eta^k \,(1-\eta)^{n-k}=\dfrac{n!}{(n-k)!\, k!} \,\eta^k \,(1-\eta)^{n-k}
\end{equation}
where the parameter $0 \le \eta \le 1$ is the probability of having the outcome ``win'' and $1-\eta$, the outcome ``loss''.
Therefore, we can say that the Bernoulli binomial distribution above corresponds to the sequence of non-negative integers $n \in \mathbb{N}$.

In two recent articles \cite{curadoetal2011, bergeronetal2012} 
we generalized the binomial law\footnote{In our articles, the terms  \emph{generalized binomial} or \emph{binomial-like} or \emph{deformed binomial} distribution refer to mathematical expressions like (\ref{pgotico}) in which  $x_n$ are non-negative real arbitrary numbers. In most papers devoted to probability and statistics in which such  terms are used, their meaning is a lot more restrictive and  concerns counting combinatorics and integer numbers.} using an increasing sequence of nonnegative real numbers $\{ x_n \}_{n \in \mathbb{N}}$, where by convention $x_0=0$. The new ``formal'' (i.e. not necessarily positive) probabilities $\mathfrak{p}_k^{(n)}(\eta)$ are defined as
\begin{equation}
\label{pgotico}
\mathfrak{p}_k^{(n)}(\eta)=\dfrac{x_n!}{x_{n-k}! x_k!} \eta^k p_{n-k}(\eta)\equiv \binom{
x_n}{ x_k}\eta^k p_{n-k}(\eta)
\end{equation}
where the factorials $x_n!$ are given by
\begin{equation}
x_n!=x_1 x_2 \dots x_n, \quad x_0! \eqdef 1,
\end{equation}
and where the polynomials $p_n(\eta)$ are constrained by the normalization
\begin{equation}
\label{somapgotico}
\forall n \in \mathbb{N}, \quad \sum_{k=0}^n \mathfrak{p}_k^{(n)}(\eta)= 1.
\end{equation}
Note that this distribution is called asymmetric in the sense that the expression (\ref{pgotico}) is not invariant under  $k \to n-k$ and $\eta \to 1-\eta$, as it happens in the binomial case. 
The polynomials $p_n(\eta)$ can be obtained by recurrence from  Eq. (\ref{somapgotico}) , the first ones being $p_0(\eta)=1$ and $p_1(\eta)=1-\eta$. The $p_n=\mathfrak{p}_0^{(n)}$ possess a probabilistic interpretation as long as $\mathfrak{p}_k^{(n)}$ are themselves probabilities. This latter condition holds only if the $p_n(\eta)$ are nonnegative. Therefore non-negativeness of the $p_n$ for $\eta \in [0,1]$ is a key-point to preserve the statistical interpretation.
In this case,  the quantity $\mathfrak{p}_k^{(n)}(\eta)$  can still be interpreted as the probability of having $k$ wins and $n-k$ losses in a sequence of \emph{correlated} $n$ trials.  But, due to the asymmetry, there can be bias favoring either win or loss, even in the case $\eta = 1/2$. 

 In order to analyze this positiveness  constraint and characterize the solutions, an analysis based on generating functions has been developed in \cite{bergeronetal2012}.  

In \cite{curadoetal2011} 
it was proved that the analytical function 
\begin{equation}
\label{eqn:XnGFct}
\mathcal{N}(t)=\sum_{n=0}^\infty \dfrac{t^n}{x_n!}
\end{equation}
gives the generating function of  $p_k(\eta)$ (independently of the 
non-negativeness 
question) thanks to
\begin{equation}
\label{eqn:PnGFct}
\dfrac{\mathcal{N}(t)}{\mathcal{N}(\eta t)}=\sum_{k=0}^\infty p_k(\eta)\dfrac{t^k}{x_k!} 
\end{equation}
 we defined in \cite{bergeronetal2012} the set $\Sigma$ as
 the set of entire series $\mathcal{N}(t)=\sum_{n=0}^\infty a_n t^n$ possessing a non-vanishing radius of convergence and verifying $a_0=1$ and $\forall n \ge1, \, a_n>0$. We associate to each of these functions $\mathcal{N}$ the positive sequence $\{ x_n \}_{n \in \mathbb{N}}$ defined as $x_0=0$ and $\forall n \ge 1, \, x_n=a_{n-1}/a_n=n \, \mathcal{N}^{(n-1)}(0)/\mathcal{N}^{(n)}(0)$.
Then  we obtain $a_n=1/(x_n!)$ and  the definition of Eq.\eqref{eqn:XnGFct} is recovered.
We also defined $\Sigma_0$ as the set of entire series $f(z)=\sum_{n=0}^\infty a_n z^n$ possessing a non-vanishing radius of convergence and verifying the conditions $a_0=0$, $a_1 >0$ and $\forall n \ge 2, \, a_n \ge 0$. Finally, we proved that 
\begin{equation}
\label{sigma+}
\Sigma_+:=\left\{ \mathcal{N} \in \Sigma \, | \,  \forall \eta \in [0,1[, \, p_n(\eta) > 0 \, \right\}=\{ e^F \, | \, F \in \Sigma_0\}\, ; 
\end{equation}
$\Sigma_+$ is the set of generating functions solving the non-negativeness problem.\\

An interesting property reenforcing the probabilistic interpretation is the Poisson-like limit of the asymmetric deformed binomial distribution. If the sequence $ \{ x_n\}_{n\in \mathbb{N}}$ is such that, 
at fixed finite $m$, 
$\underset{n \to \infty}{\lim}{\frac{x_{n-m}}{x_n}} = 1$, then
the limit when  $n \to \infty$ of  $ \mathfrak{p}_k^{(n)}(\eta)$  with  $\eta = \dt/x_n$ is equal to
 $ \frac{1}{\mathcal{N}(\dt)}\, \frac{\dt^k}{x_k!}$.

\section{The Symmetric Generalized Binomial Distribution}
\label{sec:symmetricbin}

Using the notations of the previous section, we propose to study the following generalization of the binomial distribution,
\begin{equation}
\label{genbinsym}
\mathfrak{p}_k^{(n)}(\eta)=\dfrac{x_n!}{x_{n-k}! x_k!} q_k(\eta) q_{n-k}(1-\eta)\, ,
\end{equation}
where the $\{x_n\}$ is a non-negative sequence and the $q_k(\eta)$ are polynomials of  degree $k$, while $\eta$ is a running parameter on the interval $[0,1]$. The $\mathfrak{p}_k^{(n)}(\eta)$ are constrained by:
\begin{itemize}
\item[(a)]the normalization
\begin{equation}
\label{eq:polynorm}
\forall n \in \mathbb{N}, \quad \forall \eta \in [0,1], \quad \sum_{k=0}^n \mathfrak{p}_k^{(n)}(\eta)=1,
\end{equation}
\item[(b)]the non-negativeness condition (due to statistical interpretation)
\begin{equation}
\label{poscondsym}
\forall n,k \in \mathbb{N}, \quad \forall \eta \in [0,1], \quad \mathfrak{p}_k^{(n)}(\eta) \ge 0.
\end{equation}
\end{itemize}
The normalization condition \eqref{eq:polynorm} applied to the case $n=0$ leads to
\begin{equation}
\label{eq:non-negativeness}
\forall \eta \in [0,1], \quad \mathfrak{p}_0^{(0)}(\eta)=q_0(\eta) q_0(1-\eta)=1.
\end{equation}
Since $q_0$ is assumed to be a polynomial of degree $0$, we deduce that
$q_0(\eta)=\pm 1$.
\emph{In all the remainder we assume}
\begin{equation}
\label{q01} 
q_0(\eta)=1.
\end{equation}
Using this value of $q_0$ we deduce
\begin{equation}
\forall n \in \mathbb{N}, \quad \forall \eta \in [0,1], \quad \mathfrak{p}_0^{(n)}(\eta)=q_n(1-\eta).
\end{equation}
Therefore the non-negativeness condition \eqref{poscondsym} is equivalent to the non-negativeness of the polynomials $q_n$ on the interval $[0,1]$.
Here, like in the asymmetric case,  the quantity $\mathfrak{p}_k^{(n)}(\eta)$  can be interpreted as the probability of having $k$ wins and $n-k$ losses in a sequence of \emph{correlated} $n$ trials.  Besides, as we recover the  invariance under  $k \to n-k$ and $\eta \to 1-\eta$ of the binomial distribution, no bias (in the case $\eta = 1/2$) can exist favoring either win or loss.

\subsubsection*{Generating functions}

Setting aside the non negativeness condition, different generating functions (and different sets of polynomials) can be associated with the conditions expressed above. Indeed, given a sequence $\{x_n\}$ and its  generalized exponential $\mathcal{N}(t)$ introduced in \eqref{eqn:XnGFct}, we define the related generating function as
\begin{equation}
\label{genfctgen}
F(\eta; t) := \sum_{n=0}^\infty \dfrac{q_n(\eta)}{x_n!} t^n\, . 
\end{equation}
Due to the normalization property \eqref{eq:polynorm}, it is easy to show that
\begin{equation}
\label{FetaF}
F(\eta; t)\,F(1-\eta; t)= \mathcal{N}(t)\,, 
\end{equation}
which is a non-linear functional equation for the variable $\eta$ with a given boundary condition. The general solution is found from \cite{polyanin}:
\begin{equation}
\label{solgengen}
F(\eta; t) = \pm\sqrt{\mathcal{N}(t)} e^{\Phi(\eta,1-\eta;t)}\, , 
\end{equation}
where the arbitrary function $\Phi(x,y;t)$ is antisymmetric w.r.t. the permutation of variables $x$ and $y$, $\Phi(x,y;t) = -\Phi(y,x;t)$. 

\subsubsection*{The simplest class of generating functions}
 The simplest generating function giving rise to polynomials of binomial type \cite{umbral} corresponds to the choice $\Phi(x,y;t) = (x-y)\varphi(t)$, together with positive sign in \eqref{solgengen} and the boundary condition
 \begin{equation}
\label{boundCondG}
F(0; t)= 1 \Leftrightarrow F(1;t) = \mathcal{N}(t)\,. 
\end{equation}
This  leads to $\mathcal{N}(t) = e^{2\Phi(1,0;t)}= e^{2\varphi(t)}$ 
 and eventually to
 \begin{equation}
\label{simplePhi}
F(\eta; t) = e^{(2\eta)\varphi(t)} = (\mathcal{N}(t))^{\eta}\, . 
\end{equation}
 Hence, starting from Eq.\eqref{eqn:XnGFct} and an $x_n$-generating function $\mathcal{N}(t) \in \Sigma$ (the definition of $\Sigma$ was given above), let us define the functions $q_n$ as
\begin{equation}
\label{eq:GenFctqn}
\forall \eta \in [0,1], \quad \mathcal{N}(t)^\eta=\sum_{n=0}^\infty q_n(\eta)\dfrac{t^n}{x_n!} \, . 
\end{equation}
This definition makes sense since:
\begin{itemize}
\item[(a)] $\mathcal{N} \in \Sigma$ implies that $\mathcal{N}$ is analytical around $t=0$,
\item[(b)] $\mathcal{N} \in \Sigma$ implies $\mathcal{N}(0)=1$
\end{itemize}
Then $\ln (\mathcal{N}(t))$ is analytical around $t=0$, and therefore $\mathcal{N}(t)^\eta=\exp( \eta \ln (\mathcal{N}(t)))$ possesses a convergent series expansion around $t=0$ (for all $\eta \in \mathbb{C}$). 
The functions $q_n$ defined by \eqref{eq:GenFctqn} verify the following properties, whose proof is given in Appendix \ref{proof1}:
\begin{itemize}
\item[(a)] The two first polynomials $q_n(\eta)$ are $q_0(\eta)=1$ (according to our choice given by equation \eqref{q01}) and
\begin{equation}
q_1(\eta)=\eta ;
\end{equation}
more generally  the $q_n$ verify the recurrence relation
\begin{equation}
\label{eq:qnrecurrence}
\forall n \in \mathbb{N}, \, \forall \eta \in [0,1], \, q_{n+1}(\eta)= \eta \, \dfrac{x_{n+1}}{n+1} \, \sum_{k=0}^n \left(\begin{array}{c}x_n \\x_k \end{array}\right) \dfrac{n-k+1}{x_{n-k+1}}\,  q_k(\eta-1);
\end{equation}
\item[(b)] the $q_n$'s are polynomials of degree $n$ obeying
\begin{equation}
\forall n \in \mathbb{N}\, , \ q_n(1)=1, \, {\rm and } \ \forall \, n\ne 0\, , \ q_n(0)=0\, , 
\end{equation}
 and they fulfill the normalization condition \eqref{eq:polynorm};
\item[(c)] the $q_n$'s fulfill the functional relation
\begin{equation}
\label{eq:fctrelation}
\forall z_1, z_2 \in \mathbb{C}, \, \forall n \in \mathbb{N}, \, \ \sum_{k=0}^n \left(\begin{array}{c}x_n \\x_k \end{array}\right) q_k(z_1) q_{n-k}(z_2)=q_n(z_1+z_2)\, .
\end{equation}
\end{itemize}
We give in Appendix \ref{combqn} a combinatorial expression of the polynomials $q_n$ issued from the obvious formula 
\begin{equation}
\label{qndergen}
\left. q_n(\eta) = x_n!\frac{d^n}{dt^n} (\mathcal{N}(t))^{\eta} \right\vert_{t=0}\, . 
\end{equation}
We note that these polynomials, suitably normalized, are of binomial type \cite{umbral}. Indeed, with the definition 
$\tilde q_n(\eta) = \frac{n!}{x_n!} q_n(\eta)$ we have $\tilde q_n(z_1+z_2)= \sum_{k=0}^n \binom{n}{k} \tilde q_k(z_1) \tilde q_{n-k}(z_2)$.

We deduce from (b) that the series expansion in $t$ of the function $\mathcal{N}(t)^\eta$ generates a family of polynomials $q_n$ that fulfill one half of the sought conditions: it remains to specify under which conditions the non-negativeness problem can be solved.
Note that the usual binomial distribution is included in this family: it corresponds to the obvious choice $\mathcal{N}(t)=e^t$.

\subsubsection*{The non-negativeness problem}

The procedure is similar to the one developed in \cite{bergeronetal2012}. \\
Since we already know that $q_0(\eta)=1$ and $\forall n\ne 0, q_n(0)=0$, the non-negativeness condition is equivalent to specify that for any $\eta \in ]0,1], \, q_n(\eta)>0$ and then the function $t \mapsto \mathcal{N}(t)^\eta$ belongs to $\Sigma$. We prove in Appendix \ref{proof2} that the set $\tilde \Sigma_+$, defined as 
\begin{equation}
\tilde\Sigma_+:=\left\{ \mathcal{N} \in \Sigma \, | \,  \forall \eta \in ]0,1], \, q_n(\eta) > 0 \, \right\} \, ,
\end{equation}
is in fact
\begin{equation}
\label{tsigma+}
\tilde\Sigma_+=\{ e^F \, | \, F \in \Sigma_0\} = \Sigma_+\, .
\end{equation}
$\tilde\Sigma_+$ is the set of generating functions solving the non-negativeness problem for the symmetric case; it is actually the same $\Sigma_+$ as for the asymmetric case.

Equation  (\ref{tsigma+})   characterizes those  generating functions in the simplest class which  yield symmetric deformations of the binomial distribution preserving a statistical interpretation.
Therefore all coefficients $q_n^{(k)}(0)$ of the polynomials $q_n(\eta)$ are in fact positive, and then the $q_n$ are positive on $\mathbb{R}^+$.

Furthermore if $\mathcal{N}=e^F$ with $F(t)=\sum_{n=1}^\infty a_n t^n \in \Sigma_0$, the $x_n!$ and the $q_n$ can be generally expressed in terms of the complete Bell polynomials $B_n(a_1,a_2,\dots,a_n)$ \cite{comtet74} as
\begin{equation}
\label{xqbell}
x_n! = \dfrac{n!}{B_n(a_1,a_2,\dots,a_n)} \quad \textrm{and} \quad q_n(\eta) = \dfrac{B_n(\eta a_1, \eta a_2,\dots, \eta a_n)}{B_n(a_1,a_2,\dots,a_n)} \, ,
\end{equation}
thanks to the relation 
\begin{equation}
e^{F(t)} = \sum_{n=0}^\infty B_n(a_1,a_2,\dots,a_n)\frac{t^n}{n!} \,.
\end{equation}
Moreover, using the Bell polynomials properties \cite{comtet74} we also have from Eq. \eqref{xqbell}
\begin{equation}
\label{qpolybell}
q_0(\eta)=1 \quad \textrm{and} \quad \forall n \ge 1, \, q_n(\eta) = \sum_{k=1}^n \dfrac{B_{n,k}(a_1,a_2,\dots,a_{n-k+1})}{B_n(a_1,a_2,\dots, a_n)} \eta^k \, ,
\end{equation}
where the $B_{n,k}(a_1,a_2,\dots,a_{n-k+1})$ are the so-called partial Bell polynomials.\\
Starting from any sequence of \underline{positive} numbers $a_1, a_2, \dots$ such that $a_1 \ne 0$, Eqs \eqref{xqbell}-\eqref{qpolybell} give an extended set of explicit solutions of our  problem, letting aside all  convergence conditions for entire series.

\section{Expected values and moments}
\label{expval}
\subsection{Expectation value}
Using successively
\begin{equation}
\label{prodgensym}
\mathcal{N}(zt)^{\eta}\mathcal{N}(t)^{1-\eta} = \sum_{n=0}^{\infty}\frac{t^n}{x_n!}\left( \sum_{k=0}^n z^k  \mathfrak{p}_{k}^{(n)}(\eta)\right)\, , 
\end{equation}
\begin{equation}
\label{dergensym}
(z\partial_{z})^K\left\lbrack \mathcal{N}(zt)^{\eta}\mathcal{N}(t)^{1-\eta}\right\rbrack_{z=1} = \sum_{n=0}^{\infty}\frac{t^n}{x_n!}\langle k^K\rangle_n\, , 
\end{equation}
we find the simple expression for the expectation value of the $k$ variable:
\begin{equation}
\label{expeckK}
\langle k \rangle_n(\eta) = \eta n\, . 
\end{equation}
We note that we obtain the same value as for the ordinary binomial case. 
\subsection{Variance}
From \eqref{eqn:LnN} and \eqref{dergensym} we derive the formula for the second moment:
\begin{equation}
\label{secmometa}
\langle k^2 \rangle_n(\eta) = \eta n^2 + \eta(\eta-1) c_n\, , 
\end{equation}
with 
\begin{equation}
\label{defcn}
 c_n= \sum_{m=1}^{n-1} m(n-m)\,\binom{x_n}{x_m}\, q'_m(0)\,, \ n\geq 2\, \ c_0=0=c_1\, ,
\end{equation}
and
\begin{equation}
t^2 \dfrac{\mathcal{N}'(t)^2}{\mathcal{N}(t)}=\sum_{n=0}^\infty \dfrac{c_n}{x_n!} t^n \,.
\end{equation}
Assuming $\mathcal{N}(t)= \exp(\sum_{n=1}^\infty a_n t^n)$ and using \eqref{xqbell}, we also obtain, in terms of Bell polynomials, that
\begin{equation}
c_n = \dfrac{n!}{B_n(a_1,a_2,\dots,a_n)} \sum_{m=1}^{n-1} \dfrac{m}{(n-m-1)!} a_m B_{n-m}(a_1,a_2,\dots,a_{n-m}) \,.
\end{equation}

Thus the variance is given by
\begin{equation}
\label{varianeta}
(\sigma_k)^2_n (\eta) = \eta(1-\eta) (n^2 - c_n)\, , 
\end{equation}
which entails the interesting property $c_n\leq n^2$ for all $n$. 

\section{Examples with logarithmic scale invariance}
\label{sec:examples}
\subsection{Logarithmic scale invariance}
\label{sec:logsi}
Suppose that the sequence $\{x_n\}$ depends  on a parameter $\alpha$, $x_n\equiv x_n(\alpha)$, in such a way that for the primary generating function
\begin{equation*}
\mathcal{N}(t) \equiv \mathcal{N}(t,\alpha)\, ,  
\end{equation*}
we have
\begin{equation}
\label{logscinv1}
(\mathcal{N}(t,\alpha))^{\eta}= \mathcal{N}(\eta t,\eta\alpha)\, , 
\end{equation}
i.e., its logarithm is homogeneous of degree 1, $ \ln (\mathcal{N}(\eta t,\eta\alpha))= \eta \ln (\mathcal{N}(t,\alpha))$. Choosing $\eta = 1/\alpha$ allows us to write also
\begin{equation}
\label{Neta}
\mathcal{N}(t,\alpha) = \mathcal{N}(t/\alpha,1).
\end{equation}
The logarithmic scale invariance allows to give by comparison the interesting  expression for polynomial $q_n$:
\begin{equation}
\label{expscqn}
q_n(\eta,\alpha)= \eta^n\frac{x_n(\alpha)!}{x_n(\eta\alpha)!}\,. 
\end{equation} 
For the sake of more simplicity in the notation, from now on we will write $q_n(\eta)$ instead of $q_n(\eta,\alpha)$.

This property is shared by the $3$ examples studied below. It is interesting to remark that the entropy, which is also an homogeneous function of degree one of its natural variables, can be written in the microcanonical ensemble as $S = \ln \Omega$, where $ \Omega$ is the number of microscopic states of the system. This suggests that one could interpret $\mathcal N(t)$ as the number of states of the system, a number which is modified by the presence of correlations among the events.  

\subsection{First example: ``$q$-exponential''}
\label{sec:example1}
\begin{equation}
\label{genex1}
\NN (t) = \left( 1- \frac{t}{\alpha}\right)^{-\alpha}\, , \quad \alpha > 0\, . 
\end{equation}
We first note that if $\alpha \to \infty$ then $\NN(t) \to e^t$, i.e. we return to the ordinary binomial case. The corresponding sequence is bounded by $\alpha$ and  given by
\begin{equation}
\label{xnex1}
x_n = \frac{n\alpha}{n + \alpha -1}\, , \quad \lim_{n\to \infty}x_n = \alpha \, . 
\end{equation}
For the factorial we have:
\begin{equation}
\label{factxn1}
x_n! = \alpha^n \frac{\Gamma(\alpha) n!}{\Gamma(n+\alpha)}=  \frac{ \alpha^n n!}{(\alpha)_n} = \frac{\alpha^n}{\binom{n+\alpha -1}{n}}\, ,
\end{equation}
where $(z)_n = \Gamma(z+n)/\Gamma(z)$ is the Pochhammer symbol.
The corresponding polynomials are   given by
\begin{equation}
\label{xnex1}
q_n(\eta) =  \frac{\Gamma (\alpha)}{\Gamma (n+ \alpha)}\, \frac{\Gamma (n+ \alpha \eta)}{\Gamma (\alpha \eta)}=  \frac{ (\alpha\eta)_n}{ (\alpha)_n}\, , 
\end{equation}
and satisfy the recurrence relation
\begin{equation}
\label{recrelex1}
q_n(\eta) =  \frac{ n + \alpha \eta -1}{n + \alpha -1}\, q_{n-1}(\eta)\, , \quad \mbox{with}\ q_0(\eta) = 1\, . 
\end{equation}
In particular $q_1(\eta) = \eta$. 
The  distribution $\mathfrak{p}_k^{(n)}(\eta)$ defined by these polynomials is given by
\begin{align}
\label{genbinsymA}
\mathfrak{p}_k^{(n)}(\eta)= &\binom{n}{k}\, \frac{\Gamma(\alpha)}{\Gamma(\alpha +n)}\, \frac{\Gamma(\eta\alpha +k)}{\Gamma(\eta\alpha)}\,\frac{\Gamma((1-\eta)\alpha +n-k)}{\Gamma((1-\eta)\alpha)}\\
\label{genbinsymB} =&\frac{\binom{-\eta \alpha}{k}\, \binom{-(1-\eta) \alpha}{n-k}}{\binom{- \alpha}{n}}\, . 
\end{align}
\paragraph{Comment}

This distribution reminds us of the hypergeometric discrete distribution  \cite{rice07}.
We recall that the  hypergeometric distribution  describes the probability of $k$ successes in $n$ draws from a finite population of size $N$ containing $m$ successes without replacement (e.g. lotto or bingo game):
\begin{equation}
\label{hypergeodist}
\mathfrak{h}_k^{(n)}= \frac{\binom{m}{k}\, \binom{N-m}{n-k}}{\binom{N}{n}}\, .
\end{equation}
Normalization results from the famous combinatorial identity
\begin{equation}
\label{combident}
\sum_{k=0}^m \binom{m}{k}\, \binom{N-m}{n-k}= \binom{N}{n}\, .
\end{equation}
Such an identity can be extended to real or complex numbers. Here, we can replace $N \rightarrow -\alpha$ and $m\rightarrow -\eta\alpha$ and so $\eta = m/N$.  \\

Hence, the polynomials $q_n(\eta)$ corresponding to the hypergeometric distribution read in terms of $N$ and $m$ as:
\begin{equation}
\label{qnhypergeom}
q_n(\eta) = q_n\left(\frac{m}{N}\right)= \frac{(N-n)!\,m!}{N!\,(m-n)!}= \frac{\binom{m}{n}}{\binom{N}{n}}\, . 
\end{equation}
We notice that they cancel when $n> m$. The sequence $(x_n)$ and the corresponding factorials are given by
\begin{equation}
\label{hypergxn}
x_n= \frac{Nn}{N-n+1}\, , \quad x_n! = \frac{N^n}{\binom{N}{n}}\, . 
\end{equation}

As indicated in Eq. \eqref{genex1}, in the present case we need $\alpha>0$  to agree with our definition of $\Sigma_+$. Since $\Sigma_+$ contains all solutions that give positive polynomials on the interval $[0,1]$, the fact that \emph{the substitution ``$N \rightarrow -\alpha$ and $m\rightarrow -\eta\alpha$ and so $\eta = m/N$" gives well-defined probabilities} seems to be contradictory. In fact it is not, because only special values of $\eta$ are chosen. 

Returning to the distribution \eqref{genbinsymA}, the expression of $c_n$ is found to be
\begin{equation}
\label{qgausscn}
c_n=  n(n-1)\frac{\alpha}{1+\alpha}\,.
\end{equation}
Therefore, besides the mean value $\langle k \rangle_n = \eta n$,  we have the simple expression for  the variance:
\begin{equation}
\label{varsymex1}
(\sigma_k)^2_n(\eta) = n^2 \eta(1-\eta)\frac{1 + \alpha/n}{1 + \alpha}\, . 
\end{equation}
As expected, we check that at large $\alpha$ the behavior of $\sigma_k$  is ordinary binomial. We also note that it becomes proportional to  the mean value at large values of $n$.  

\subsubsection*{Leibniz triangle rule}

We prove in Appendix \ref{leibqgauss} that the  $q$-exponential is the unique case for which the  Leibniz triangle rule holds  exactly true. 
Precisely, the relation
\begin{equation}
\label{leibnitz1}
\varpi_k^{n-1} = \varpi_k^{n} + \varpi_{k+1}^{n} 
\end{equation}
is  verified by the  quantities, and only by them,
\begin{equation}
\label{tildp1}
\varpi_k^{n}(\eta) := \frac{\mathfrak{p}_k^{(n)}(\eta)}{\binom{n}{k}}= \binom{y_n}{y_k} q_k(\eta) q_{n-k}(1-\eta)=  \frac{(\eta\alpha)_k\, ((1-\eta)\,  \alpha)_{n-k}}{(\alpha)_n}\, , 
\end{equation}
with $ y_n:= x_n/n$. These relations generalize the $r_k^n$'s of \cite{haneletal09}. The latter correspond to the limit case $\alpha \to \infty$.

\subsubsection*{Entropy}
Following \cite{haneletal09}, the above  result (\ref{leibnitz1})  implies that the entropy which is extensive when $n\to \infty$ is, like in the ordinary binomial case,  Boltzmann-Gibbs.  Let us explore in a more comprehensive numerical and analytical way this statistical aspect of the considered distribution. A natural definition of the BG entropy for the distribution (\ref{genbinsymA}) is 
\begin{equation}
\label{BGentr1}
S^{(n)}_{\mathrm{BG}} = - \sum_{k=0}^{n} \binom{n}{k} \varpi_k^{n}(\eta)\ln\varpi_k^{n}(\eta)\, .
\end{equation}
The presence of the binomial coefficient in the sum (\ref{BGentr1}) means a counting of the degeneracy. Its extensive property is shown in Figure \ref{entropy} where it is compared with  the Tsallis entropy  $S^{(n)}_q \deq (1- \sum_{k=0}^{n}\binom{n}{k} (\omega_k^n)^q)/(q-1)$. 
\begin{figure}[htb!]
\begin{center}
\includegraphics[width=2.6in]{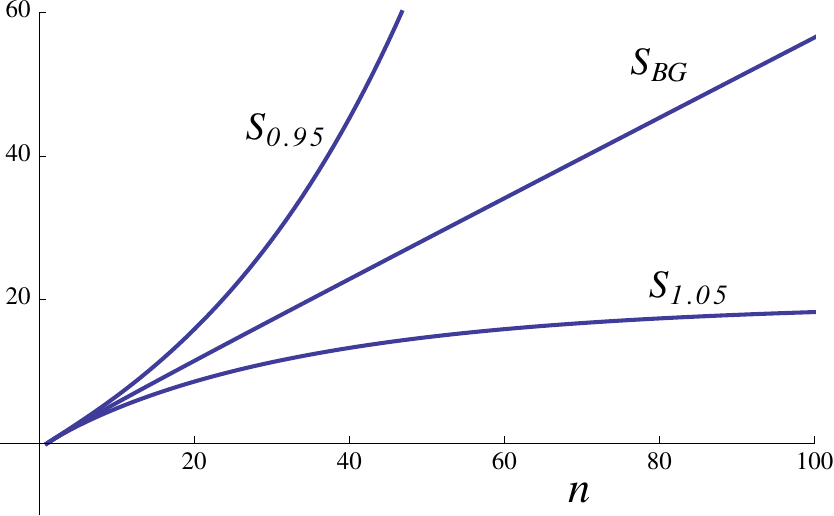}
\caption{$n$-dependence of entropies for the  ``$q$-exponential'' case 
($\mathcal{N}(t) = (1-t/\alpha)^{-\alpha}$).  The Boltzmann-Gibbs (BG) and Tsallis ($S_q$) entropies of 
the distribution are shown  for $\eta = 1/2$ and $\alpha = 3$. Upper curve is for $q=0.95$. Bottom curve is for $q=1.05$.
}
\label{entropy}
\end{center}
\end{figure}

\newpage

\subsubsection*{Limit distribution}

It is interesting to check (numerically) that the limit distribution of \eqref{genbinsymA} as $n\to \infty$ is of Wigner type, 
 as is shown in Figure \ref{limdistfigqg1}. We remark that the Wigner distribution is a kind of $q$-Gaussian, with $q=-1$. 
 There are two curves in this graph, one for $n=1000$ and the other for $n=2000$, showing that the limit distribution is already reached.
 
 \begin{figure}[htb!]
\begin{center}
\includegraphics[width=2in]{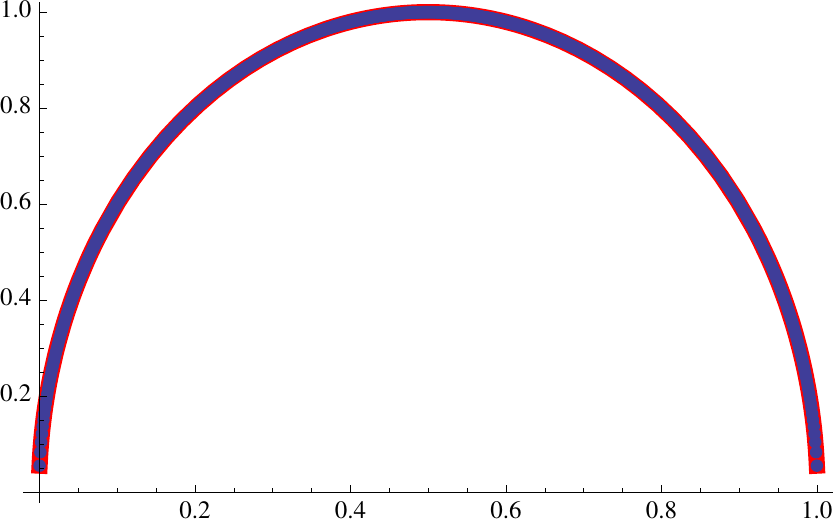}
\caption{
Limit distribution of the distribution \eqref{genbinsymA}.}
\label{limdistfigqg1}
\end{center}
\end{figure}

\subsection{Second example: modified Abel polynomials}
\label{sec:example2}
\begin{equation}
\label{genex2}
\NN (t) =e^{-\alpha W(-t/\alpha)}\, , \quad \alpha > 0\, , 
\end{equation}
where $W$ is the Lambert function, i.e.  solving  the functional equation $W(t)e^{W(t)} = t$. 
We first note that if $\alpha \to \infty$ then $\NN(t) \to e^t$. The corresponding sequence is bounded by $\alpha/e$ and  given by
\begin{equation}
\label{xnex2}
x_n = \frac{n\alpha}{n + \alpha }\left(1- \frac{1}{n + \alpha}\right)^{n-2}\, , \quad \lim_{n\to \infty}x_n = \alpha/e \, . 
\end{equation}
We also note that $x_n \to n$ as $\alpha \to \infty$.  The corresponding factorial is
\begin{equation}
\label{factabel}
x_n!= n!\, \frac{\alpha^{n-1}}{(n+\alpha)^{n-1}}\, .
\end{equation}
The  polynomials $q_n$'s are   given by
\begin{equation}
\label{xnex2}
q_n(\eta) =  \eta \frac{\left(\eta + \frac{n}{\alpha}\right)^{n-1}}{\left(1 + \frac{n}{\alpha}\right)^{n-1}}\, . 
\end{equation}
We verify that  $q_0(\eta) = 1$ and $q_1(\eta) = \eta$. 

The corresponding probability distribution is given by:
\begin{equation}
\label{abelprobdist}
\mathfrak{p}_k^{(n)}(\eta) = \binom{n}{k}\eta(1-\eta) \frac{(\eta + k/\alpha)^{k-1}(1-\eta + (n-k)/\alpha)^{n-k-1}}{(1+n/\alpha)^{n-1}}\,. 
\end{equation}
In order to calculate the variance, we need the explicit form of $c_n$ defined in \eqref{defcn}. We first note that 
\begin{equation*}
q'_n(0) = \frac{n^{n-1}}{(n+\alpha)^{n-1}}\,.
\end{equation*}
There results for  $c_n$:
\begin{align}
\label{abelcn}
\nonumber c_n&= n! \frac{\alpha}{(n+\alpha)^{n-1}}\sum_{m=0}^{n-2} \frac{(m+1)^{m}}{m!}\,\frac{(n-1+\alpha -m)^{n-2-m}}{(n-2 -m)!}\\
&= n(n-1)\frac{\alpha}{(n+\alpha)^{n-1}}\mathcal{T}_{n-2}(\alpha+2)\, , 
\end{align}
where the polynomials $\mathcal{T}_{n}(z)$ are defined by the generating function
\begin{equation}
\dfrac{1}{(1+W(x))^2} e^{- z W(x)} = \sum_{n=0}^\infty \dfrac{\mathcal{T}_{n}(z)}{n!} (-x)^n \, .
\end{equation} 
Interesting  properties of the $\mathcal{T}_{n}(z)$ are given in \cite{Tnz}, particularly related to  combinatorics (e.g. subfactorials or derangement numbers). We prove in \cite{Tnz} its relation to 
 the incomplete Gamma function \cite{abramovitz}:
 \begin{equation}
\label{igamnz}
\mathcal{T}_n(z) = e^{z+n}\, \Gamma(n+1,z+n) \,,
\end{equation}
where
 \begin{equation}
\label{incgamfct}
\Gamma(a,x) = \int_{x}^{\infty}t^{a-1}\,e^{-t}\,dt\, , \quad \mathrm{Re}(a) >0\, , 
\end{equation}
The expression for the variance is then given by 
\begin{align}
\label{varsymex2}
\nonumber (\sigma_k)^2_n(\eta) &= n^2\eta(1-\eta)\left(1- \frac{n-1}{n}\frac{\alpha}{(n+\alpha)^{n-1}}\mathcal{T}_{n-2}(\alpha+2)\right)\\
&= n^2\eta(1-\eta)\left(1- \frac{n-1}{n}\frac{\alpha}{(n+\alpha)^{n-1}}\sum_{k=0}^{n-2}\frac{(n-2)!}{k!}(n +\alpha -2)^k\right)\, . 
\end{align}
From \eqref{z+n}  we check that at large $\alpha$  the behavior of $\sigma_k$  is ordinary binomial. We also note that it becomes proportional to  the mean value at large values of $n$.  

In this case, we note that the Leibniz triangle is satisfied asymptotically as $n \to \infty$.

\subsection{Third example: Hermite polynomials}
\label{sec:example3}

\begin{equation}
\label{genex2}
\NN (t) =e^{t+ \frac{a}{2}t^2}\, , \quad 0< a <1\, . 
\end{equation}
The logarithmic scale invariance holds with $\alpha:= 1/a$. 
The corresponding sequence ${x_n}$ has the same factorial as in  the asymmetric case \cite{bergeronetal2012}:
\begin{equation}
\label{xnfat2} 
x_n! = \left[  \frac{i^n \left(\frac{a}{2}\right)^{n/2}}{n!} H_n \left( 
\frac{-i}{\sqrt{2 a}}
\right) \right]^{-1} = \left\lbrack\sum_{m=0}^{\left\lfloor\frac{n}{2}\right\rfloor} \frac{(a/2)^m}{m!(n-2m)!} \right\rbrack^{-1}:= \frac{1}{\varphi_{n}(a)}\,.
\end{equation}
In particular, $x_1!=x_1= 1$, $x_2! = 2/(a+1)$. Also, $x_n = \varphi_{n-1}(a)/\varphi_{n}(a)$, and  
we  know from \cite{bergeronetal2012}  that $x_n \approx \sqrt{n/a}$ as $n \to \infty$.  
The corresponding polynomials and probability distributions are respectively  given by
\begin{equation}
\label{xnex2}
q_n(\eta) =  \frac{x_n!}{n!} \left(i\sqrt{\frac{a \eta}{2}}\right)^n\, H_n\left(-i\sqrt{\frac{\eta}{2a}} \right)\,  
\end{equation}

and
\begin{equation}
\label{hermprobdist}
\mathfrak{p}_k^{(n)}(\eta)= \eta^k (1-\eta)^{n-k} \frac{\varphi_k(a/\eta)\varphi_{n-k}(a/(1-\eta))}{\varphi_n(a)}\,. 
\end{equation}
 
We know from the general case that $q'_0=0$ and $q'_1(0)=1$. The expression of  $q'_2(0)$ is found to be 
\begin{equation*}
q'_2(0) =\frac{a}{1+a}\,,
\end{equation*}
and we have $q'_n(0) = 0$ for all $n\geq 3$. Therefore, we have for the $c_n$'s, $n\geq 2$,  the following  expressions: 
\begin{align}
\label{hermitecn}
 c_2&= \frac{2}{1+a}\, , \\
\nonumber c_n &= n(n-1)\left\lbrack 1 + 2 \frac{H_{n-2}\left( \frac{-i}{\sqrt{2 a}} \right)}{H_{n}\left(\frac{-i}{\sqrt{2 a}} \right)} \right\rbrack\\ 
&=n(n-1)\left\lbrack 1  - \frac{a}{n(n-1)}\frac{\varphi_{n-2}(a)}{\varphi_{n}(a)} \right\rbrack \, , \quad n\geq 3\,.
\end{align}

The expression for the variance is then given by 
\begin{equation}
\label{varsymex2}
 (\sigma_k)^2_n(\eta) =  n\eta(1-\eta)\left\lbrack 1 + \frac{a}{n}\frac{\varphi_{n-2}(a)}{\varphi_n(a)}\right\rbrack\, .
\end{equation}
In this case, we check trivially  that for $a=0$  the behavior of $\sigma_k$  is ordinary binomial. On the other hand,  at large $n$, 
\begin{equation*}
\frac{\varphi_{n-2}(a)}{\varphi_n(a)} = \frac{\varphi_{n-2}(a)}{\varphi_{n-1}(a)}\frac{\varphi_{n-1}(a)}{\varphi_n(a)} \sim \frac{n}{a}\, , 
\end{equation*}
so the behavior of $\sigma_k$  is ``quasi'' ordinary binomial at large $n$:
\begin{equation}
\label{hermasvar}
(\sigma_k)_n(\eta) \sim \sqrt{2n \eta(1-\eta)}\, . 
\end{equation}

Also in this example the Leibniz triangle rule is obeyed only asymptotically as $n \to \infty$.

\section{Nonlinear coherent states}
\label{sec:NLSCS}
{We introduce here  families of coherent states built from the polynomials $q_n$, following the construction procedure explained in \cite{gazeaubook09}.  
The basic idea of this section is view the polynomials $q_n(\eta)$ as \emph{deformations} of $\eta^n$, associated to a deformation of the usual factorial $n!$. Although it seems natural to consider the $x_n!$ as a good candidate for the $n!$-deformation, this choice does not allow in general to obtain the so-called resolution of the identity. To avoid this problem we need in fact to introduce a new deformation of integers based on integrals.\\
In the following we restrict ourselves to the subset $\Sigma_+^{(1)} \subset \Sigma_+$ of generating functions $\Sigma_+^{(1)}=\{e^{\sum_{n=1}^\infty a_n t^n} | a_1=1, a_n \ge 0, \sum_{n=1}^\infty a_n < \infty\}$. We do not miss anything by using $\Sigma_+^{(1)}$. Indeed, starting from any function $F(t)=\sum_{n=1}^\infty a_n t^n \in \Sigma_0$ with a radius of convergence $R$, the function $\mathcal{N}=e^{{F_0}}$ with ${F_0}(t)=\alpha F(t/\alpha)/a_1$ and $\alpha>R$ belongs to $\Sigma_+^{(1)} $. Now if $\mathcal{N} \in \Sigma_+^{(1)}$, the first element of the sequence $\{x_n\}$ is $x_1=1$, and we know that the value $\mathcal{N}(1)$ is well-defined.\\
Starting from a generating function $\mathcal{N}=e^F \in \Sigma_+^{(1)}$ and the corresponding polynomials $q_n$, we define
\begin{equation}
\label{deffandb}
f_n = \int_0^\infty q_n(\eta) e^{-\eta} d\eta \quad \textrm{and} \quad b_{m,n}=\int_0^1 q_m(\eta) q_n(1-\eta) d\eta \,.
\end{equation}
The $f_n$ and $b_{m,n}$ are  deformations of the usual factorial and beta function deduced from their usual integral definitions by the substitution $\eta^n \mapsto q_n(\eta)$. We prove in Appendix \ref{proofsigmaprop} (for $\mathcal{N} \in \Sigma_+^{(1)}$) the following properties:
\begin{equation}
\label{fbproperties}
x_n! \le f_n \, , \forall \eta \in \mathbb{R}^+, \, \sum_{n=0}^\infty \dfrac{q_n(\eta)}{f_n} < \infty \quad \textrm{and} \quad b_{m,n} \ge \dfrac{x_m! x_n!}{(m+n+1)!} \,.
\end{equation}
We introduce the function $N(z)$ defined on $\mathbb{C}$ as
\begin{equation}
\forall z \in \mathbb{C}, \, N(z) = \sum_{n=0}^\infty \dfrac{q_n(z)}{f_n} \,.
\end{equation}
This definition makes sense since from Eq. \eqref{fbproperties}
\begin{equation}
\sum_{n=0}^\infty \left| \dfrac{q_n(z)}{f_n} \right| \le \sum_{n=0}^\infty \dfrac{q_n(|z|)}{f_n} < \infty .
\end{equation}

\subsection{Nonlinear coherent states on the complex plane}
Let $\mathcal{H}$ be some separable Hilbert space with orthonormal basis $\{ | e_n \rangle \}, n \in \mathbb{N}$. We define the (normalized) coherent state $\{ |z \rangle \}$, $z \in \mathbb{C}$, as
\begin{equation}
| z \rangle = \dfrac{1}{\sqrt{N(|z|^2)}} \sum_{n=0}^\infty \dfrac{1}{\sqrt{f_n}} \sqrt{q_n(|z|^2)}\, e^{i \, n \arg(z)} |e_n \rangle \,.
\end{equation}
These states verify the following resolution of the unity $1_d$ of $\mathcal{H}$:
\begin{equation}
\int_{\mathbb{C}} \dfrac{d^2z}{\pi} e^{- |z|^2} N(|z|^2) \, |z \rangle \langle z | = 1_d \, .
\end{equation}
These coherent states are a natural generalization of the usual harmonic coherent states that correspond to the special polynomials $q_n(\eta)= \eta^n$. The $q_n(\eta)=\eta^n$ are associated to the generating function $\mathcal{N}(t)=e^t \in \Sigma_+^{(1)}$ that gives the usual binomial distribution.

{\subsection{Nonlinear spin coherent states}

These states can be considered as generalizing the  spin coherent states $| \theta,\phi\rg$  in the $2j+1$-dimensional space Hilbert space $\mathcal{H}_j \sim \C^{2j+1}$ with $2j=n$.  To each point in the  unit  sphere $S^2$ with spherical coordinates $(\theta, \phi)$, $0\leq \theta \leq \pi$, $0\leq \phi< 2\pi$ there corresponds the unit norm state 
\begin{equation}
\label{genspinCS}
| \theta,\phi\rg= \dfrac{1}{\sqrt{\varpi(\theta)}} \sum_{m=-j}^{j}\sqrt{\frac{q_{j-m} \left( \cos^2 \theta/2  \right) q_{j+m} \left( \sin^2 \theta/2 \right) }{b_{j-m,j+m}}} e^{im\phi}|j,m\rg\, , 
\end{equation}
where the $b_{m,n}$ are defined in Eq. \eqref{deffandb} and $\varpi(\theta)$ is given by
\begin{equation}
\varpi(\theta) = \sum_{m=-j}^{j}\frac{q_{j-m} \left( \cos^2 \theta/2  \right) q_{j+m} \left( \sin^2 \theta/2 \right) }{b_{j-m,j+m}}\,.
\end{equation}
Here $\{ |j,m\rg\, , \, -j\leq m \leq j\}$ is the familiar quantum angular momentum orthonormal basis of $\mathcal{H}_j $ (actually it could be \underline{any} orthonormal basis). Now  the family of states (\ref{genspinCS}) resolves the unity $1_d$  in $\mathcal{H}_j $:
\begin{equation}
\label{resunitj}
\frac{1}{4\pi}\int_{S^2} \sin\theta\, d\theta\, d\phi \, \varpi (\theta) \, | \theta,\phi\rg\lg  \theta,\phi| = 1_d\, . 
\end{equation}
These states can be named coherent states (in a wide sense) which are deformations of the spin coherent states.

\section{Conclusion}
\label{sec:conclusions}

In previous articles we have presented a method to construct deformed binomial and Poisson distributions which can be interpreted as probabilities as well. There, the deformed distributions did not preserve the symmetry between {\it win} and {\it loss}. In this paper we studied a normalized {\it win-loss} symmetrical generalization of the binomial distribution that fulfills the non-negative condition and can be interpreted as the probability associated to a sequence of correlated trials.  The related generating functions were defined in a general way and the non-negativeness problem was solved for the simplest class.  Three examples were presented; they are interesting at various levels, not the least  being that they lead to manageable results. Also, they pertain to a class of what we defined as \textit{logarithmic scale invariant generating functions}. The logarithmic scale invariance  is a particularly interesting property as it points to an interpretation of  the primary generating function as the number of states of a system where correlations are present. In one of the presented examples, namely the $q$-exponential, the Leibniz triangle rule is exactly obeyed and therefore the extensive entropy is Boltzmann-Gibbs. We proved that this is the only case and this result itself is new and rather interesting. In many other cases, the Leibniz rule is only asymptotically true. In a forthcoming paper we will continue a systematic study of the entropy behavior for various types of deformations of the binomial distribution. 

We also introduced a new class of coherent states built from the symmetric deformed distributions. In future works we think of using these coherent states to quantize the complex plane of the unit sphere in the spirit of \cite{gazeaubook09}. 

Finally, we note that there is another possible generalization of the binomial distribution along the same lines, in which  the probabilities of {\it win} and {\it loss} are given, respectively by two different polynomials, according to

\begin{equation}
\label{asymetric2pol}
\mathfrak{p}_k^{(n)}(\eta)=\dfrac{x_n!}{x_{n-k}! x_k!} p_k(\eta) q_{n-k}(1-\eta)\, .
\end{equation}

\section*{Acknowledgments}
H. Bergeron and J.P. Gazeau thanks the CBPF and the CNPq for financial support and CBPF for hospitality. E.M.F. Curado acknowledges CNPq and FAPERJ for financial support. 

\appendix 
\section{Proof of the properties of $q_n$}
\label{proof1}
Let us define $G(t,\eta)=\mathcal{N}(t)^\eta$.\\
(a) We remark first that $G(0,\eta)=1$ (because $\mathcal{N}(0)=1$) then $q_0(\eta)=1$. Moreover $\dfrac{\partial G}{\partial t}(0,\eta)= \eta \mathcal{N}'(0) G(0,\eta-1)=\eta/x_1$, therefore $q_1(\eta)=\eta$. \\
More generally we have $\dfrac{\partial G}{\partial t}(t,\eta)= \eta \mathcal{N}'(t) G(t,\eta-1)$. Identifying the series expansion in $t$ of the left and right hand side, we obtain the sought recurrence relation \eqref{eq:qnrecurrence}.\\
(b) Knowing $q_0(\eta)=1$ and the recurrence relation \eqref{eq:qnrecurrence}, we deduce that the functions $q_n$ are polynomials, the degree of $q_n$ being $n$. Furthermore the recurrence relation shows that $\forall n\ne 0, q_n(0)=0$ and the series expansion in $t$ of $G(t,1)$ shows that $\forall n \in \mathbb{N}, q_n(1)=1$. Finally we remark that $G(t,\eta) G(t,1-\eta)=\mathcal{N}(t)$. Using the series expansion in $t$ of the left hand side of this equation, we obtain the normalization condition \eqref{eq:polynorm}.\\
(c) More generally since the series expansion in $t$ of $G(t,\eta)$ is valid for any $\eta \in \mathbb{C}$ and since $G(t,z_1) G(t,z_2)=G(t,z_1+z_2)$, we obtain the functional relation \eqref{eq:fctrelation}.

\section{Combinatorial expression of $q_n$}
\label{combqn}
In the case where the generating function of polynomials $q_n$ is $(\mathcal{N}(t))^{\eta}$ the application of 0.430-2 in \cite{gradstein} to \eqref{qndergen} leads to the complete (although not so easily manageable) expression for polynomials $q_n$, for $n\geq 1$:
\begin{equation}
\label{}
q_n(\eta)= x_n! \sum_{m=1}^{n}\frac{\Gamma(\eta+1)}{\Gamma(\eta -m +1)}\sum_{\mathcal{I}(i_s)}\frac{n!}{i_1!i_2!\dots i_k!}\frac{1}{\prod_{s=1}^{k}(x_{s}!)^{i_s}}\, ,
\end{equation}
where the multiple summation symbol $\mathcal{I}(i_s)$ means summation on all possible variable-length -uples of nonnegative  integers $i_1$, $i_2$, $\dotsc$, $i_k$,  submitted to the 2  constraints
\begin{align*}
&\sum_{s=1}^{k} si_s=n\, , \\
&\sum_{s=1}^{k} i_s=m\, .
\end{align*}

\section{Proof of the non-negativeness of $q_n$}
\label{proof2}

\begin{defi} Using the notation $G_{\mathcal{N},\eta}(t)=\mathcal{N}(t)^\eta$, $\Sigma_+$ is defined as
\begin{equation}
\label{def:sigmaplus}
\tilde{\Sigma}_+=\left\{ \mathcal{N} \in \Sigma \, | \,  \forall \eta \in ]0,1], \, G_{\mathcal{N},\eta} \in \Sigma \, \right\}
\end{equation}
\end{defi}

\begin{prop}
\label{prop:sigmaplus}
The set $\tilde{\Sigma}_+$ defined in \eqref{def:sigmaplus} satisfies the following properties:
\begin{enumerate}
\item $\forall \mathcal{N}_1, \mathcal{N}_2 \in \Sigma_+, \, \mathcal{N}_1\mathcal{N}_2 \in \tilde{\Sigma}_+$
\item $\forall F \in \Sigma_0, \quad t \mapsto e^{F(t)} \in \tilde{\Sigma}_+$, (the set $\Sigma_0$ has been introduced in the previous  sequence to \eqref{sigma+}). 
\end{enumerate}
\end{prop}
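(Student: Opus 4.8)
The plan is to deduce both assertions from three elementary closure facts about $\Sigma$ and $\Sigma_0$, together with the remark that every $\mathcal{N}\in\Sigma$ satisfies $\mathcal{N}(0)=1$, so that $\ln\mathcal{N}$ — and hence $G_{\mathcal{N},\eta}=\mathcal{N}^\eta=e^{\eta\ln\mathcal{N}}$ — is a well-defined analytic germ at $t=0$ (principal branch), and that membership in $\Sigma$ or in $\Sigma_0$ is a condition on the Taylor coefficients at the origin only. Thus everything may be verified as an identity of convergent power series near $t=0$.

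First I would record the three closure facts. (R1) $\Sigma$ is stable under products: if $g_1=\sum_{n\ge0}\alpha_nt^n$ and $g_2=\sum_{n\ge0}\beta_nt^n$ lie in $\Sigma$, then $g_1g_2$ is analytic near $0$, its constant term is $\alpha_0\beta_0=1$, and for $n\ge1$ the Cauchy coefficient $\sum_{k=0}^n\alpha_k\beta_{n-k}$ contains the strictly positive summand $\alpha_n\beta_0=\alpha_n$ and no negative summand, hence is positive. (R2) $\Sigma_0$ is stable under addition and under multiplication by a positive scalar: each such operation preserves analyticity at $0$, keeps the constant term $0$, makes the coefficient of $t$ a positive real, and keeps the remaining coefficients non-negative. (R3) $F\in\Sigma_0\Rightarrow e^F\in\Sigma$: the function $e^F$ is analytic near $0$ with $e^{F(0)}=1$, and since $\exp$ has non-negative Taylor coefficients while $F$ has positive linear coefficient $a_1$, the coefficient of $t^n$ in $e^F$ is at least $a_1^n/n!>0$ for every $n\ge1$.

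Given these, item 2 is immediate: for $F\in\Sigma_0$ put $\mathcal{N}=e^F\in\Sigma$ by (R3); for every $\eta\in\,]0,1]$ one has $G_{\mathcal{N},\eta}=e^{\eta F}$ with $\eta F\in\Sigma_0$ by (R2), hence $G_{\mathcal{N},\eta}\in\Sigma$ again by (R3), so $\mathcal{N}\in\tilde\Sigma_+$. For item 1, recall from \eqref{sigma+} that $\Sigma_+=\{e^F\mid F\in\Sigma_0\}$; writing $\mathcal{N}_i=e^{F_i}$ with $F_i\in\Sigma_0$ gives $\mathcal{N}_1\mathcal{N}_2=e^{F_1+F_2}$ with $F_1+F_2\in\Sigma_0$ by (R2), and item 2 then yields $\mathcal{N}_1\mathcal{N}_2\in\tilde\Sigma_+$. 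If one prefers to read the hypothesis of item 1 directly as $\mathcal{N}_1,\mathcal{N}_2\in\tilde\Sigma_+$, the argument is equally short: $\mathcal{N}_1\mathcal{N}_2\in\Sigma$ by (R1), and for $\eta\in\,]0,1]$ one has $(\mathcal{N}_1\mathcal{N}_2)^\eta=G_{\mathcal{N}_1,\eta}\,G_{\mathcal{N}_2,\eta}\in\Sigma$ by (R1) applied to the two factors, whence $\mathcal{N}_1\mathcal{N}_2\in\tilde\Sigma_+$.

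I do not expect a genuine obstacle here: the only delicate point is the bookkeeping around $\mathcal{N}^\eta$, namely fixing the branch of $\ln$ on a neighborhood of $t=0$ (legitimate precisely because $\mathcal{N}(0)=1$ for $\mathcal{N}\in\Sigma$) and checking that $(\mathcal{N}_1\mathcal{N}_2)^\eta=\mathcal{N}_1^\eta\mathcal{N}_2^\eta$ and $(e^F)^\eta=e^{\eta F}$ hold as identities of convergent power series near the origin. Everything else is the elementary coefficient arithmetic of (R1)--(R3).
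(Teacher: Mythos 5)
Your proposal is correct and follows essentially the same route as the paper: both arguments rest on the identities $G_{\mathcal{N}_1\mathcal{N}_2,\eta}=G_{\mathcal{N}_1,\eta}G_{\mathcal{N}_2,\eta}$ and $G_{e^F,\eta}=e^{\eta F}$ together with the closure facts that $\Sigma$ is stable under products and that $e^F\in\Sigma$ for $F\in\Sigma_0$. The only difference is that the paper imports those closure facts from Proposition~2.2 of \cite{bergeronetal2012}, whereas you verify them directly by coefficient arithmetic; you also rightly flag (and resolve, in both readings) the paper's notational slip between $\Sigma_+$ and $\tilde{\Sigma}_+$ in the statement of item~1.
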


\begin{proof}
To prove point (1), from the proposition 2.2 of \cite{bergeronetal2012} we have 
that $\mathcal{N}_1$ and $\mathcal{N}_2 \in \Sigma$ implies that $\mathcal{N}_1\mathcal{N}_2 \in \Sigma$. Furthermore $G_{\mathcal{N}_1\mathcal{N}_2, \eta}=G_{\mathcal{N}_1, \eta} G_{\mathcal{N}_2, \eta}$, then if $\mathcal{N}_1$ and $\mathcal{N}_2 \in \tilde{\Sigma}_+$, by definition $G_{\mathcal{N}_1, \eta}$ and $G_{\mathcal{N}_2, \eta} \in \Sigma$. Then from Proposition  2.2 of \cite{bergeronetal2012}, we deduce $G_{\mathcal{N}_1\mathcal{N}_2, \eta} \in \Sigma$.\\
The part (2) is obtained always from the proposition 2.2 of \cite{bergeronetal2012}. First if $F \in \Sigma_0$, then $\mathcal{N}=e^F \in \Sigma$ (proposition 2.2). Furthermore $G_{\mathcal{N}, \eta}(t)=e^{\eta F(t)}$. For $\eta \in ]0,1]$, $t \mapsto \eta F(t)$ also belongs to $\Sigma_0$ then  we deduce always from Proposition 2.2, $G_{\mathcal{N}, \eta} \in \Sigma$.
\end{proof}

\subsection{The characterization of $\tilde{\Sigma}_+$}
In fact point (2) in Proposition \ref{prop:sigmaplus} is not only a sufficient but also a necessary condition to obtain functions of $\tilde{\Sigma}_+$. In order to prove it, we start from the following lemma: 

\begin{lem}
\label{lem:NpN}
For any $\mathcal{N} \in \tilde{\Sigma}_+$, the polynomials $q_n$ verify $q'_n(0) \ge 0$ and furthermore
\begin{equation}
\label{eqn:NpN}
\dfrac{\mathcal{N}'(t)}{\mathcal{N}(t)}= \sum_{n=0}^\infty \dfrac{(n+1) q'_{n+1}(0)}{x_{n+1}!} \, t^n
\end{equation}
\end{lem}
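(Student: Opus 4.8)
The plan is to establish the two assertions of the lemma in sequence, both relying on the generating-function identity $\mathcal{N}(t)^\eta = \sum_n q_n(\eta)\, t^n/x_n!$ from \eqref{eq:GenFctqn}, differentiated with respect to $\eta$ at $\eta=0$. First I would take $\partial_\eta$ of both sides of $\mathcal{N}(t)^\eta = e^{\eta \ln \mathcal{N}(t)}$; since $\mathcal{N}\in\Sigma$ the function $\ln\mathcal{N}(t)$ is analytic near $t=0$ with $\ln\mathcal{N}(0)=0$, so $\partial_\eta \mathcal{N}(t)^\eta = \ln\mathcal{N}(t)\cdot \mathcal{N}(t)^\eta$, and at $\eta=0$ this gives $\partial_\eta \mathcal{N}(t)^\eta\big|_{\eta=0} = \ln\mathcal{N}(t)$. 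On the other side, term-by-term differentiation of the (uniformly convergent on a disc) series yields $\sum_n q'_n(0)\, t^n/x_n!$. Hence $\ln\mathcal{N}(t) = \sum_{n\ge 1} q'_n(0)\, t^n/x_n!$ (the $n=0$ term vanishes because $q_0(\eta)\equiv 1$, so $q'_0(0)=0$). Differentiating this identity in $t$ and using $\frac{d}{dt}\ln\mathcal{N}(t) = \mathcal{N}'(t)/\mathcal{N}(t)$ gives, after reindexing $n\mapsto n+1$, exactly \eqref{eqn:NpN}.

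Next I would prove $q'_n(0)\ge 0$ for $\mathcal{N}\in\tilde\Sigma_+$. By definition of $\tilde\Sigma_+$ (Definition in Appendix \ref{proof2}), for every $\eta\in\,]0,1]$ the function $G_{\mathcal{N},\eta}(t)=\mathcal{N}(t)^\eta$ belongs to $\Sigma$, which means all its Taylor coefficients are nonnegative; equivalently $q_n(\eta)\ge 0$ for all $n$ and all $\eta\in\,]0,1]$. Since also $q_n(0)=0$ for $n\ge 1$, the difference quotient $\big(q_n(\eta)-q_n(0)\big)/\eta = q_n(\eta)/\eta \ge 0$ for $\eta\in\,]0,1]$, and letting $\eta\to 0^+$ gives $q'_n(0) = \lim_{\eta\to 0^+} q_n(\eta)/\eta \ge 0$. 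This is the nonnegativity claim.

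The only mildly delicate point, and the one I would be most careful about, is the legitimacy of interchanging $\partial_\eta$ with the infinite sum $\sum_n q_n(\eta)t^n/x_n!$ — but this is routine: for $|t|$ strictly less than the radius of convergence $R$ of $\mathcal{N}$ (equivalently of $\ln\mathcal{N}$), the two-variable function $(t,\eta)\mapsto \mathcal{N}(t)^\eta$ is jointly analytic, so the power series in $t$ has coefficients that are entire (indeed polynomial) in $\eta$ and one may differentiate under the summation on any compact subdisc. No genuine obstacle arises; the content is essentially the observation that $\ln\mathcal{N}$ is the "generating function of the $q'_n(0)$'s," packaged together with the one-sided derivative estimate coming from membership in $\Sigma$. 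I would write the argument in that order: (i) identify $\sum q'_n(0)t^n/x_n! = \ln\mathcal{N}(t)$ by differentiating \eqref{eq:GenFctqn} at $\eta=0$; (ii) differentiate in $t$ to obtain \eqref{eqn:NpN}; (iii) deduce $q'_n(0)\ge 0$ from the sign of the difference quotient using $q_n\ge 0$ on $]0,1]$ and $q_n(0)=0$.
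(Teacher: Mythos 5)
Your proposal is correct, and the nonnegativity part ($q_n\ge 0$ on $]0,1]$, $q_n(0)=0$, hence the difference quotient $q_n(\eta)/\eta\ge 0$ forces $q'_n(0)\ge 0$) is exactly the paper's argument. For the identity \eqref{eqn:NpN} you take a mildly different route: the paper first differentiates $\mathcal{N}(t)^\eta$ in $t$, divides by $\eta$, and then lets $\eta\to 0^+$ term by term, explicitly \emph{assuming} that the limit may be interchanged with the infinite sum; you instead differentiate in $\eta$ at $\eta=0$ to obtain $\ln\mathcal{N}(t)=\sum_{n\ge 1}q'_n(0)\,t^n/x_n!$ and then differentiate in $t$. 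The two computations are equivalent, but yours has two advantages: the interchange is cleanly justified by joint analyticity of $(t,\eta)\mapsto\mathcal{N}(t)^\eta$ (the Taylor coefficients in $t$ are polynomials in $\eta$, so $\partial_\eta$ passes through the coefficient extraction), whereas the paper leaves its interchange as an assumption; and your intermediate identity is precisely Eq.\ \eqref{eqn:LnN} of Theorem \ref{theo:LnN}, which the paper only recovers afterwards by integrating \eqref{eqn:NpN}, so your organization proves the lemma and the first half of the theorem in one stroke. No gap.
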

\begin{proof}

To begin with, $q_0(\eta)=1$, so $q'_0(\eta)=0$. For $n \ge 1$, the polynomials $q_n(\eta)$ are nonnegative on the interval $\eta \in (0,1]$ and $q_n(0)=0$. Then $q'_n(0)$ cannot be  negative, otherwise $q_n(\eta)$ would be negative on some interval $(0,\epsilon)$. we conclude that $\forall n \in \mathbb{N}, \, q'_n(0) \ge 0$.\\
Now using the function $G_{\mathcal{N},\eta}(t)=\mathcal{N}(t)^\eta$ and by a differentiation with respect to $t$, we obtain first
\begin{equation}
\forall \eta \in (0,1], \, \mathcal{N}'(t) \mathcal{N}(t)^{(\eta-1)}= \dfrac{1}{\eta} \sum_{n=0}^\infty \dfrac{(n+1)q_{n+1}(\eta)}{x_{n+1}!} t^n
\end{equation}
Taking into account the relation $\forall n \ne 0,  q_n(0)=0$, and assuming that the permutation between $\lim_{\eta \to 0}$ and $\sum$ is valid we obtain the equation \eqref{eqn:NpN}.
\end{proof}

\vspace{0.5cm}

This lemma leads to the theorem

\begin{theo}
\label{theo:LnN}
For all $\mathcal{N} \in \tilde{\Sigma}_+$, $\ln \mathcal{N} \in \Sigma_0$ and
\begin{equation}
\label{eqn:LnN}
\ln \mathcal{N}(t)= \sum_{n=1}^\infty \dfrac{q'_n(0)}{x_n!} \, t^n.
\end{equation}
Furthermore we deduce the characterization
\begin{equation}
\label{characterization1}
\tilde{\Sigma}_+=\{ e^F \, | \, F \in \Sigma_0\}
\end{equation}
\end{theo}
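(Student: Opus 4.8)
\textbf{Proof proposal for Theorem \ref{theo:LnN}.}

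The plan is to combine Lemma \ref{lem:NpN} with an integration step, and then to close the loop using Proposition \ref{prop:sigmaplus}(2). First I would take $\mathcal{N} \in \tilde{\Sigma}_+$ and invoke Lemma \ref{lem:NpN}: it gives both the nonnegativity $q'_n(0) \ge 0$ for all $n$ and the series identity $\mathcal{N}'(t)/\mathcal{N}(t) = \sum_{n \ge 0} (n+1)q'_{n+1}(0) t^n / x_{n+1}!$, valid in a neighbourhood of $t=0$ where $\mathcal{N}$ is analytic and nonvanishing (recall $\mathcal{N}(0)=1$). Since $\mathcal{N} \in \Sigma$ has positive radius of convergence and $\mathcal{N}(0)=1$, the function $\ln \mathcal{N}(t)$ is itself analytic around $t=0$ with $\ln \mathcal{N}(0)=0$, so it has a convergent Taylor expansion $\ln \mathcal{N}(t) = \sum_{n \ge 1} b_n t^n$ with some positive radius of convergence.

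Next I would identify the coefficients $b_n$. Differentiating $\ln \mathcal{N}(t) = \sum_{n\ge1} b_n t^n$ termwise gives $\mathcal{N}'(t)/\mathcal{N}(t) = \sum_{n \ge 1} n b_n t^{n-1} = \sum_{n \ge 0}(n+1)b_{n+1}t^n$. Comparing with the expression from Lemma \ref{lem:NpN} and matching coefficients of $t^n$ yields $(n+1)b_{n+1} = (n+1)q'_{n+1}(0)/x_{n+1}!$, i.e. $b_n = q'_n(0)/x_n!$ for all $n \ge 1$. This establishes \eqref{eqn:LnN}. For the claim $\ln \mathcal{N} \in \Sigma_0$, I need $b_1 > 0$ and $b_n \ge 0$ for $n \ge 2$: since all $x_n! > 0$ and Lemma \ref{lem:NpN} gives $q'_n(0) \ge 0$, we get $b_n \ge 0$ for all $n \ge 1$; and $b_1 = q'_1(0)/x_1! = 1/x_1 > 0$ because $q_1(\eta)=\eta$ (from property (a) of the $q_n$). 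Together with the positive radius of convergence of $\ln \mathcal{N}$ noted above, this is exactly the defining condition of $\Sigma_0$.

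Finally, for the characterization \eqref{characterization1}, one inclusion is Proposition \ref{prop:sigmaplus}(2): $\{e^F \mid F \in \Sigma_0\} \subseteq \tilde{\Sigma}_+$. The reverse inclusion is what we have just shown: if $\mathcal{N} \in \tilde{\Sigma}_+$ then $F := \ln \mathcal{N} \in \Sigma_0$ and $\mathcal{N} = e^F$, so $\mathcal{N} \in \{e^F \mid F \in \Sigma_0\}$. Hence the two sets coincide. I expect the only delicate point to be the justification already flagged in the proof of Lemma \ref{lem:NpN}, namely the legitimacy of exchanging $\lim_{\eta \to 0}$ with the infinite sum; granting that (as the authors do), the rest is routine coefficient-matching of convergent power series. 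It is also worth noting in passing that \eqref{characterization1}, combined with the analogous result for the asymmetric case recalled in \eqref{sigma+}, gives $\tilde{\Sigma}_+ = \Sigma_+$, which is the content of \eqref{tsigma+}.
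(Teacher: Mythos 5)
Your proposal is correct and follows essentially the same route as the paper: Lemma \ref{lem:NpN} gives the coefficient identity and the sign condition $q'_n(0)\ge 0$, the expansion \eqref{eqn:LnN} follows by matching coefficients (equivalently, integrating \eqref{eqn:NpN} term by term with $\ln\mathcal{N}(0)=0$), membership in $\Sigma_0$ uses $q'_1(0)=1$, and the characterization \eqref{characterization1} combines this with Proposition \ref{prop:sigmaplus}(2). Your explicit checks of analyticity and of the leading coefficient $b_1=1/x_1>0$ are slightly more detailed than the paper's one-line argument, but the substance is identical.
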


\begin{proof}

Eq.\eqref{eqn:LnN} is immediately obtained by integrating Eq. \eqref{eqn:NpN} term by term and  taking into account the value $\ln \mathcal{N}(0)=0$. 
Using the property $q'_n(0) \ge 0$, which was shown in lemma \ref{lem:NpN}, and the fact that $q'_1(0)= 1$, as $q_1(\eta)=\eta$, we deduce that $\ln \mathcal{N} \in \Sigma_0$.\\ 
The last part of the theorem results from the proposition \eqref{prop:sigmaplus}  (point (2)) and the 
previous comment.
\end{proof}

The theorem \ref{theo:LnN} solves completely the problem of generating functions leading to symmetric deformations of the binomial distribution preserving a statistical interpretation.

\section{Exact solution to the Leibnitz triangle rule}
\label{leibqgauss}
Here we prove a uniqueness result about the Leibnitz triangle rule
\begin{equation}
\label{leibnitzrep}
\varpi_k^{n-1} = \varpi_k^{n} + \varpi_{k+1}^{n} \,, \quad 0\leq k\leq n-1\,. 
\end{equation}
\begin{prop}
\label{leibn1}
With the notations of Section \ref{sec:symmetricbin}, given a sequence of numbers $(x_n)$ with $x_0= 0$, $x_n >0$ for all $n>0$,  with generating function $\mathcal{N}(t) = \sum_{n=0}^{\infty}t^n/x_n!$, and the polynomials $q_n(\eta)$ with generating function $(\mathcal{N}(t))^{\eta}= \sum_{n=0}^{\infty}q_n(\eta)t^n/x_n!$,  the  quantities 
\begin{equation}
\label{tildp1}
 \frac{\mathfrak{p}_k^{(n)}(\eta)}{\binom{n}{k}}= \binom{y_n}{y_k} q_k(\eta) q_{n-k}(1-\eta)
\end{equation}
where $y_n = x_n/n$, are solution to \eqref{leibnitzrep} for $n\geq 1$  if and only if 
the polynomials $q_n$ and the sequence $(y_n)$ read as either
\begin{equation}
\label{solleibnitz1}
q_n(\eta) = \prod_{k=1}^n \frac{ k -1+ \alpha \eta}{k-1 + \alpha}\, , \quad y_n = \frac{\alpha}{n+\alpha -1}\, , \ n\geq 1\, ,
\end{equation}
where $\alpha$ is an arbitrary constant (finite case), or 
\begin{equation}
\label{solleibnitz2}
q_n(\eta) = \eta^n\, , \quad y_n = 1\, \ n\geq 1\, ,
\end{equation}
(infinite case). 
\end{prop}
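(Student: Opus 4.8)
The plan is to translate the Leibniz triangle rule \eqref{leibnitzrep} into a functional identity on the generating functions and then extract a rigidity statement. First I would write $\varpi_k^{n}(\eta) = \binom{y_n}{y_k} q_k(\eta) q_{n-k}(1-\eta)$, recalling $y_n = x_n/n$ and $\binom{y_n}{y_k} = y_n!/(y_k!\,y_{n-k}!)$ with $y_n! := y_1 y_2 \cdots y_n$, so that $x_n!/n! = y_n!$ and $\binom{x_n}{x_k} = \binom{y_n}{y_k}\binom{n}{k}^{-1}\cdot$(nothing, simply the ratio). The natural object is $P(\eta;t) := \sum_{n\geq 0} \varpi_k$-type sums; more precisely, since $\sum_{k=0}^n \binom{x_n}{x_k} q_k(\eta) q_{n-k}(1-\eta) t^n/x_n!$ already collapses to $\mathcal{N}(t)$ by \eqref{FetaF}, I would instead keep the bivariate generating function $H(\eta;s,t) := \sum_{n\geq 0}\sum_{k=0}^n \binom{y_n}{y_k} q_k(\eta) q_{n-k}(1-\eta) s^k t^{n-k}$ or, better, work one diagonal at a time. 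The cleanest route: fix the generating-function pair $\bigl(F(\eta;t), F(1-\eta;t)\bigr)$ from \eqref{genfctgen} with $F(\eta;t)=\mathcal{N}(t)^\eta$, and observe that $\binom{y_n}{y_k}q_k(\eta)q_{n-k}(1-\eta)$ are the coefficients of $\mathcal{N}(s)^\eta \mathcal{N}(t)^{1-\eta}$ when expanded as $\sum_{n,k} (\,\cdot\,) \frac{s^k t^{n-k}}{y_k!\,y_{n-k}!}\cdot\frac{1}{\text{?}}$ — so I would first pin down exactly which two-variable series has $\binom{y_n}{y_k}q_k(\eta)q_{n-k}(1-\eta)$ as its plain (non-normalized) coefficients, call it $\Psi(\eta;s,t)$.

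The second step is to encode \eqref{leibnitzrep}. Summing $\varpi_k^{n-1} = \varpi_k^n + \varpi_{k+1}^n$ against $s^k t^{n-1-k}$ over $0\le k\le n-1$ and all $n\ge 1$ gives, after bookkeeping of the index shifts, a relation of the shape $\Psi(\eta;s,t) = (\text{linear operator in } s,t)\,\Psi(\eta;s,t)$ — concretely the right side produces $\Psi$ weighted by $s$ and by $t$ separately (the $\varpi_{k+1}^n$ term shifts $k\mapsto k+1$, contributing a factor tied to $s$, while $\varpi_k^n$ keeps $k$ but raises $n$, contributing a factor tied to $t$), so the rule becomes essentially $\Psi(\eta;s,t)\,[\,1 - (\text{something})\,] = 0$, or more honestly a relation forcing $\Psi(\eta;s,t)$ to depend on $s,t$ only through the single combination dictated by de Finetti-type scale invariance. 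I expect this to reduce to: $q_k(\eta)q_{n-k}(1-\eta)\binom{y_n}{y_k}$ depends on $n$ and $k$ only in a way compatible with $\varpi_k^n = \varpi_0^n \prod_{j} (\text{ratio})$, i.e. a telescoping/Pascal-type recursion that is rigid.

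The third step is to solve that recursion. Setting $k=0$ in \eqref{leibnitzrep} gives $q_{n-1}(1-\eta) = q_n(1-\eta) + \varpi_1^n(\eta)$, which with \eqref{eq:fctrelation} ($n=$ small) and $q_1(\eta)=\eta$ should yield a first-order recurrence for $q_n(1-\eta)/q_{n-1}(1-\eta)$; I anticipate it forces $y_n = \alpha/(n+\alpha-1)$ for a constant $\alpha = \lim_n (1-y_n)/(y_n)\cdot(\text{normalization})$ — or the degenerate $y_n\equiv 1$ — and then $q_n(\eta) = \prod_{k=1}^n (k-1+\alpha\eta)/(k-1+\alpha)$, exactly \eqref{solleibnitz1}, with \eqref{solleibnitz2} as the $\alpha\to\infty$ limit. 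Conversely I would verify directly that \eqref{solleibnitz1} satisfies \eqref{leibnitzrep}: substitute $\varpi_k^n(\eta) = (\eta\alpha)_k\,((1-\eta)\alpha)_{n-k}/(\alpha)_n$ from \eqref{tildp1} and check $\varpi_k^n + \varpi_{k+1}^n = (\eta\alpha)_k((1-\eta)\alpha)_{n-k-1}[((1-\eta)\alpha+n-k-1) + (\eta\alpha+k)]/(\alpha)_n = (\eta\alpha)_k((1-\eta)\alpha)_{n-k-1}(\alpha+n-1)/(\alpha)_n = (\eta\alpha)_k((1-\eta)\alpha)_{n-k-1}/(\alpha)_{n-1} = \varpi_k^{n-1}$, which is a one-line Pochhammer identity.

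The main obstacle is the \emph{only if} direction: showing that \eqref{leibnitzrep} alone — without assuming scale invariance a priori — pins down the whole sequence $(y_n)$ and all polynomials $q_n$. The delicate point is that \eqref{leibnitzrep} is a family of relations for each $n$ and all $k$, but the unknowns $q_n$ and $y_n$ are coupled; I would handle this by induction on $n$, using the $k=0$ and $k=n-1$ instances to control the ratios $y_n/y_{n-1}$ and $q_n(\eta)/q_{n-1}(\eta)$, and then using an intermediate $k$ together with the already-known functional equation \eqref{eq:fctrelation} to verify consistency (over-determination), which is where the rigidity — and the appearance of the single free constant $\alpha$ — comes from. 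If the bivariate-generating-function argument of step two can be pushed through cleanly, it subsumes this induction; otherwise the induction is the safe fallback.
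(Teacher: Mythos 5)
Your ``if'' direction is correct and complete: the Pochhammer computation showing that $\varpi_k^{n}+\varpi_{k+1}^{n}=\varpi_k^{n-1}$ for $\varpi_k^{n}=(\eta\alpha)_k\,((1-\eta)\alpha)_{n-k}/(\alpha)_n$ is exactly the required one-line check, and \eqref{solleibnitz2} follows as the $\alpha\to\infty$ limit. The generating-function machinery of your first two steps, however, is never brought to a usable form (you stop at ``I would first pin down which two-variable series\dots'' and ``I expect this to reduce to\dots''), and it is also unnecessary: the paper works directly with the relation obtained by substituting \eqref{tildp1} into \eqref{leibnitzrep} and cancelling factorials, namely
\begin{equation*}
q_k(\eta)\,q_{n-k-1}(1-\eta)=\frac{y_n}{y_{n-k}}\,q_k(\eta)\,q_{n-k}(1-\eta)+\frac{y_n}{y_{k+1}}\,q_{k+1}(\eta)\,q_{n-k-1}(1-\eta)\,,\qquad 0\le k\le n-1\,.
\end{equation*}

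The genuine gap is in the ``only if'' direction. The $k=0$ instance of the relation above gives $q_n(\eta)=\bigl(1+\tfrac{y_n}{y_1}(\eta-1)\bigr)q_{n-1}(\eta)$ and hence $q_n(\eta)=\prod_{j=1}^{n}\bigl(1+\tfrac{y_j}{y_1}(\eta-1)\bigr)$, but this holds for an \emph{arbitrary} positive sequence $(y_n)$ and puts no constraint on it whatsoever, so your expectation that the $k=0$ case already ``forces $y_n=\alpha/(n+\alpha-1)$'' is false; moreover the $k=n-1$ instance is just the image of $k=0$ under $k\mapsto n-1-k$, $\eta\mapsto 1-\eta$, so your proposed induction anchored on those two cases cannot close. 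The rigidity comes from the intermediate values of $k$, which your proposal never exploits concretely: divide the displayed relation by $q_k(\eta)$, let $\eta\to 0$, and use $q_m(1)=1$ together with $q_{k+1}(\eta)/q_k(\eta)\to 1-y_{k+1}/y_1$ (read off from the product formula) to obtain $1/y_n=1/y_{n-k}+1/y_{k+1}-1/y_1$ for all $1\le k\le n-1$. This Cauchy--Pexider equation forces $1/y_n=\mu n+\nu$, i.e. $y_k/y_1=\alpha/(k-1+\alpha)$ with $\alpha=1+\nu/\mu$, which substituted back into the product formula yields \eqref{solleibnitz1}. Without this (or an equivalent) use of the general-$k$ relations, the uniqueness claim --- the whole point of the proposition --- remains unproved.
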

\Proof
After injection of the expression  \eqref{tildp1} into the two members of the equality \eqref{leibnitzrep} and simplification of the factorials from both sides, we obtain for all $n\geq 1$ and $0\leq k\leq n-1$
\begin{equation}
\label{qqnk}
q_k(\eta) q_{n-k-1}(1-\eta) = \frac{y_n}{y_{n-k}}q_k(\eta) q_{n-k}(1-\eta) + \frac{y_n}{y_{k+1}}q_{k+1}(\eta) q_{n-k-1}(1-\eta)\, .
\end{equation}
We now particularize this identity to the case $k=0$. From Section \ref{sec:symmetricbin}  we know that $q_0(\eta) = 1$ and $q_1(\eta) = \eta$. There follows the recurrence relation
\begin{equation}
\label{qnrec}
q_n(\eta) = \left(1 + \frac{y_n}{y_1}(\eta-1)\right)q_{n-1}(1-\eta) \, .
\end{equation}
and the explicit solution
\begin{equation}
\label{qnexpsol}
q_n(\eta) = \prod_{k=1}^n\left(1 + \frac{y_k}{y_1}(\eta-1)\right) \, .
\end{equation}
We now divide  \eqref{qqnk} by  $q_k(\eta)$ and put $\eta = 0$. From Section \ref{sec:symmetricbin}  we know that $q_n(1) = 1$ for all $n$ and from \eqref{qnrec} we have $q_{k+1}(0)/q_k(0)= 1- y_{k+1}/y_1$. Hence we obtain the functional equation for $Z(n):=1/y_n$:
\begin{equation}
\label{cauchyeq}
Z(n) = Z(n-k) +Z(k+1) - Z(1)\,.
\end{equation}
This equation is of Cauchy-Pexider type \cite{aczel89} and is easily solved by recurrence. Its general solution is $Z(n) = \mu n + \nu$, where $\mu$ and $\nu$ are  arbitrary constants. Now, we note that  the numbers only ratios $y_k/y_1$ appear in the solution \eqref{qnexpsol}. Therefore, for all $k\geq 1$,
\begin{equation}
\label{yny1}
\frac{y_k}{y_1}= \frac{\mu + \nu}{\mu k + \nu}= \frac{1}{\tfrac{\mu}{\mu + \nu}k + \tfrac{\nu}{\mu + \nu}}= \frac{\alpha}{k-1+ \alpha}\, , 
\end{equation}
where we have introduced the parameter $\alpha = 1 + \nu/\mu$. With this expression at hand, we eventually obtain \eqref{solleibnitz1}. The limit case \eqref{solleibnitz2} corresponds to $\alpha \to \infty$ and is trivially verified.
\qed

\section{Proof of the properties of $\Sigma_+^{(1)}$}
\label{proofsigmaprop}

Using Eq. \eqref{qpolybell}, we obtain
\begin{equation}
f_n = \sum_{k=1}^n \dfrac{B_{n,k}(a_1,a_2,\dots,a_{n-k+1})}{B_n(a_1,a_2,\dots,a_n)} k!  \ge \dfrac{n! B_{n,n}(a_1)}{B_n(a_1,a_2,\dots,a_n)}\,.
\end{equation}
Because $a_1=1$ for functions of $\Sigma_+^{(1)}$, and using Eq. \eqref{xqbell}, we find the first property $f_n \ge x_n!$. Therefore we have
\begin{equation}
\forall \eta >0, \, 0 \le \sum_{n=0}^\infty \dfrac{q_n(\eta)}{f_n} \le  \sum_{n=0}^\infty \dfrac{q_n(\eta)}{x_n!} \, ,
\end{equation}
and the right hand side of the inequality is finite because the radius of convergence of $\mathcal{N}(t)^\eta = \sum_{n=0}^\infty \dfrac{q_n(\eta)}{x_n!} t^n$ is greater than 1 for functions of $\Sigma_+^{(1)}$. We conclude that $\sum_{n=0}^\infty \dfrac{q_n(\eta)}{f_n}$ is finite. Finally we have
\begin{equation}
b_{m,n} = \int_0^1 q_m(\eta) q_n(1-\eta) d\eta = \sum_{i=0}^m \sum_{j=0}^n \dfrac{q_m^{(i)}(0) q_n^{(j)}(0)}{(i+j+1)!} \ge \dfrac{q_m^{(m)}(0) q_n^{(n)}(0)}{(m+n+1)!}
\end{equation}
Since for functions belonging to $\Sigma_+^{(1)}$ we have $q_n^{(n)}(0)=x_n!$, we obtain the last inequality $b_{m,n} \ge x_m! x_n! / (m+n+1)!$.

\end{document}